\setlist[enumerate]{leftmargin=2em,itemindent=0em, labelindent=0pt,labelwidth=1.5em,labelsep=.5em, align=left, noitemsep}
\newlist{txtenum}{enumerate}{1}
\setlist[txtenum]{leftmargin=0em,itemindent=1.5em, labelindent=0pt,labelwidth=1em,labelsep=.5em, align=left}
\theoremstyle{plain}
\newtheorem{theorem}{Theorem}
\newtheorem*{theorem*}{Theorem}
\newtheorem*{proposition*}{Proposition}
\newtheorem*{corollary*}{Corollary}
\newtheorem{lemma}[theorem]{Lemma}
\newtheorem*{lemma*}{Lemma}
\newtheorem*{observation*}{Observation}
\newtheorem*{conjecture*}{Conjecture}
\newtheorem*{question*}{Question}
\newtheorem*{questions*}{Questions}
\newtheorem*{problem*}{Problem}
\newtheorem*{problems*}{Problems}
\newtheorem*{openproblem*}{Open Problem}
\theoremstyle{definition}
\newtheorem{definition}[theorem]{Definition}
\newtheorem*{definition*}{Definition}
\newtheorem*{example*}{Example}
\newtheorem*{exercise*}{Exercise}
\newtheorem*{remark*}{Remark}
\newtheorem*{remarks*}{Remarks}
\theoremstyle{remark}
\newtheorem*{claim*}{Claim}
\newcommand{\subclass}[1]{}
\newcommand{\enumTi}[1]{\renewcommand{\theenumi}{#1}}
\newcommand{\alphenumi}{\enumTi{\alph{enumi}}}
\newcommand{\romenumi}{\enumTi{\roman{enumi}}}
\newlength{\hspaceforlengthglumpf}
\renewcommand{\em}{\sl}
\newcommand{\lt}{\left}
\newcommand{\rt}{\right}
\newcommand{\abs}[1]{{\lt\lvert{#1}\rt\rvert}}
\newcommand{\RR}{\mathbb{R}}
\newlength{\algotabbingwidth}
\newcommand{\Sets}{\mathscr S}
\newcommand{\Trees}{\mathscr T}
\begin{document}
\title{Fooling Sets and the Spanning Tree Polytope}
\author{Kaveh Khoshkhah, Dirk Oliver Theis\thanks{Supported by the Estonian Research Council, ETAG (\textit{Eesti Teadusagentuur}), through PUT Exploratory Grant \#620, and by the European Regional Development Fund through the Estonian Center of Excellence in Computer Science, EXCS.}\\[1ex]
  \small Institute of Computer Science {\tiny of the } University of Tartu\\
  \small \"Ulikooli 17, 51014 Tartu, Estonia\\
  \small \texttt{\{kaveh.khoshkhah,dotheis\}@ut.ee}%
}

\date{Mon Jan  2 11:58:01 EET 2017}
\maketitle
\begin{abstract}
  In the study of extensions of polytopes of combinatorial optimization problems, a notorious open question is that for the size of the smallest extended formulation of the Minimum Spanning Tree problem on a complete graph with~$n$ nodes.  The best known lower bound is $\Omega(n^2)$, the best known upper bound is $O(n^3)$.

  In this note we show that the venerable fooling set method cannot be used to improve the lower bound: every fooling set for the Spanning Tree polytope has size $O(n^2)$.
  \par\medskip%
  \textbf{Keywords:}  Polyhedral Combinatorial Optimization, Extended Formulations, Lower Bounds.
\end{abstract}

\section{Introduction}\label{sec:intro}
The Spanning Tree polytope $P_n$ has as its vertices the characteristic vectors $\RR^{\binom{[n]}{2}}$ of edge-sets of trees with node set $[n]:=\{1,\dots,n\}$.  A complete system of inequalities and equations is the following:
\begin{subequations}
  \begin{align*}
    \sum_{e\in\binom{[n]}{2}} x_e &=   n-1           &&\\
    \sum_{e\in\binom{S}{2}} x_e   &\le \abs{S}-1     &&\forall S\subset [n],\ \abs{S}>1 \\
    x_e                           &\ge 0             &&\forall e\in\binom{[n]}{2}.
  \end{align*}
\end{subequations}
This system has exponentially many facet-defining inequalities.  There is a classical extended formulation by Wong/Martin~\cite{Martin:91,Wong:tsp:1980} with $O(n^3)$ inequalities (and variables).  A notorious open problem in polyhedral combinatorial optimization, highlighted by M.~Goemans at the 2010 \textit{Carg\`ese Workshop on Combinatorial Optimization,} asks whether or not an extended formulation with $o(n^3)$ inequalities exists.  The only known lower bound is $\Omega(n^2)$ --- this is a ``trivial'' lower bound (it is the dimension of the spanning tree polytope).

One method for proving lower bounds to sizes of extended formulations is the so-called fooling-set method (see, e.g., \cite{Kushilevitz-Nisan:Book:97}).  For that method, one looks at the function~$f$ which maps every pair of an inequality and a solution to~$0$, if the solution satisfies the inequality with equation, and otherwise to~$1$.  A \textit{fooling set of size~$r$} is a list of pairs $F = \bigl( (s_1,t_1),\dots,(s_r,t_r) \bigr)$ such that $f(s_j,t_j)=1$ for $j=1,\dots,r$, and $f(s_i,t_j)f(s_j,t_i)=0$ for $i\ne j$.  If a fooling set of size~$r$ exists, then every extended formulation must contain at least~$r$ inequalities.

It is an easy folklore fact that every polytope has a fooling set of whose size the dimension of the polytope.  But fooling set lower bounds have their limitations.  Most importantly, there can never be a fooling set larger than the square of the dimension of the polytope (\cite{Fiorini-Kaibel-Pashkovich-Theis:CombLB:13}; see also \cite{Dietzfelbinger-Hromkovic-Schnitger:96,Pourmoradnasseri-Theis:foolrk:arXiv}).  However, since that bound is known to be tight \cite{FriesenTheis13,Friesen-Hamed-Lee-Theis:fool:15}, and the dimension of the spanning tree polytope is $\Omega(n^2)$, a fooling set bound of $\Omega(n^3)$ for the Spanning Tree polytope is possible.  A small improvement was made in~\cite{Stefan:phd}, where a $O(n^{8/3}\log n)$ upper bound for the largest possible fooling set for the Spanning Tree polytope is shown.

In this note, we prove that the largest size of a fooling set in the Spanning Tree polytope is $O(n^2)$.

\begin{theorem}\label{thm:main}
  Fooling sets for the spanning tree polytope have size $O(n^2)$.
\end{theorem}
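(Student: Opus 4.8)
The plan is to reduce the problem to the facet–vertex slack matrix and then to split the two families of facets. First I would record the standard fact that a largest fooling set can be taken to consist of pairs $(s_i,t_i)$ in which each $s_i$ is one of the facet-defining inequalities listed above and each $t_i$ is a vertex, i.e.\ the characteristic vector of a spanning tree $T_i$; general valid inequalities and general feasible points do not increase the fooling-set size, since slacks are nonnegative, resp.\ convex, combinations of the facet, resp.\ vertex, slacks. Writing the slack of the subtour inequality for $S$ at a tree $T$ as $(\abs{S}-1)-\abs{T\cap\binom{S}{2}}$, one sees that this slack is positive exactly when $T[S]$ is disconnected and zero exactly when $S$ induces a subtree of $T$; the slack of $x_e\ge 0$ at $T$ is $[e\in T]$. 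Thus a fooling pair of nonnegativity type is an edge $e_i\in T_i$, and the fooling condition forces the $e_i$ to be pairwise distinct: if $e_i=e_j$ then both cross-slacks $[e_i\in T_j]$ and $[e_j\in T_i]$ equal $1$, contradicting $f(s_i,t_j)f(s_j,t_i)=0$. Hence there are at most $\binom{n}{2}$ of them, and since the mixed cross-conditions only constrain the adversary further, it remains to bound the number of subtour-type elements by $O(n^2)$.

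From now on every $s_i$ is a subtour facet $S_i$ with $T_i[S_i]$ disconnected, and for $i\neq j$ the set $S_i$ is a subtree of $T_j$ or $S_j$ is a subtree of $T_i$. For each $i$ I would fix a \emph{witness pair} $\{a_i,b_i\}\subseteq S_i$ lying in two \emph{adjacent} components of $T_i[S_i]$, i.e.\ such that the interior $U_i$ of the $T_i$-path from $a_i$ to $b_i$ avoids $S_i$; such a pair exists and $U_i\neq\emptyset$. The key step is the following disjointness lemma: if $\{a_i,b_i\}=\{a_j,b_j\}=\{a,b\}$ then $U_i\cap U_j=\emptyset$. Indeed, by the cross-condition say $S_i$ is a subtree of $T_j$; since $a,b\in S_i$, the $T_j$-path between them stays inside $S_i$, so $U_j\subseteq S_i$, whereas $U_i\subseteq[n]\setminus S_i$ by the choice of witness, giving $U_i\cap U_j=\emptyset$. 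Consequently, within a single pair-class the nonempty sets $U_i\subseteq[n]\setminus\{a,b\}$ are pairwise disjoint, each class has at most $n-2$ elements, and the map $i\mapsto(\{a_i,b_i\},u_i)$ with any chosen $u_i\in U_i$ is injective. This already yields the baseline bound $O(n^3)$.

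The hard part will be removing the extra factor of $n$. Simple injective witnesses cannot do this: one can build valid fooling sets in which many elements share the same pair $\{a,b\}$ (distinguished only by their interior vertices), and also ones in which two elements share the same incident tree-edge $\{a,c\}$ but differ in $b$, and even ones sharing $\{b,c\}$ but differing in $a$; so none of the three vertices $a_i,b_i,c_i$ can be dropped, and a genuinely global argument is needed. My plan is to bound the image of the injection $i\mapsto(\{a_i,b_i\},u_i)$ by $O(n^2)$ by fixing an interior vertex $v$ and bounding the number of indices with $v\in U_i$ by $O(n)$. For such $i$ the vertex $v$ separates $S_i$ in $T_i$, with two distinguished $T_i$-neighbours on the $a_i$- and $b_i$-sides; the convexity side of the cross-condition says that whenever $S_i$ is a subtree of $T_j$ the set $S_i$ lies in a single component of $T_j-v$, i.e.\ it does not straddle $v$. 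I would combine this ``no straddling at $v$'' with a recursion of the disjointness idea to show that, after a canonical choice, $i\mapsto(\text{the }T_i\text{-edge at }v\text{ towards }S_i)$ is injective among the indices with $v\in U_i$, of which there are then at most $n-1$; summing $O(n)$ over the $n$ choices of $v$ gives $O(n^2)$. The crux, and the step I expect to be most delicate, is exactly this per-vertex bound: controlling how the tournament-like convexity relation can accumulate interiors at a single separating vertex, which is where the tree metric, rather than the mere distinctness of the subsets $S_i$, has to be exploited.
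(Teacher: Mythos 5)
Your proposal is rigorous up to the $O(n^3)$ bound (the reduction to facet--vertex pairs, the handling of the nonnegativity facets, the existence of a witness pair $\{a_i,b_i\}$ with nonempty interior $U_i$ disjoint from $S_i$, and the disjointness lemma are all correct), but the entire content of the theorem lies in the step you only sketch: the per-vertex bound $|\{i : v\in U_i\}|=O(n)$. That step is not proved, and --- more seriously --- the specific injection you propose for it, $i\mapsto(\text{the }T_i\text{-edge at }v\text{ towards }S_i)$, cannot exist, no matter what ``canonical choice'' is made. Take nine nodes $v,w,w',a_1,b_1,a_2,b_2,a_3,b_3$ and, for $i=1,2,3$, let $S_i=\{a_i,b_i\}$ and let $T_i$ contain the path $a_i\mbox{--}w\mbox{--}v\mbox{--}w'\mbox{--}b_i$; in addition put the edge $\{a_1,b_1\}$ into $T_2$, the edges $\{a_1,b_1\},\{a_2,b_2\}$ into $T_3$, and attach all remaining nodes and components as pendant edges on the $a_i$-side so that $\{a_i,b_i\}\notin T_i$. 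This is a valid fooling set ($f(S_1,T_2)=f(S_1,T_3)=f(S_2,T_3)=0$ settles all three cross-conditions), each $S_i$ has the unique witness pair $\{a_i,b_i\}$ with $U_i=\{w,v,w'\}\ni v$, and in each $T_i$ the node $v$ has degree two with neighbours $w,w'$, one towards $a_i$ and one towards $b_i$. So all three indices have the same two candidate edges $\{v,w\},\{v,w'\}$, and no injective assignment of indices to own-tree edges at $v$ exists. The example scales: your ``no straddling at $v$'' observation constrains where $S_i$ sits in $T_j$, but puts no useful constraint on the edges of $T_i$ at $v$.

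What is missing is the idea the paper is built on: charge the index $i$ not to an edge of its \emph{own} tree near the separating vertex, but to an edge of the \emph{other} tree $T_j$ spanning the gap, re-choosing the witness pair depending on $T_j$ instead of fixing $\{a_i,b_i\}$ once and for all. Concretely, the paper proves (via a triangle lemma for witnesses) that if $S_i$ has some witness $(a,x,b)$ for $f(S_i,T_i)=1$ through $x$, and $S_i$ is connected in $T_k$, then one may choose such a witness whose endpoints $\{a,b\}$ form an edge of $T_k\cap\binom{S_i}{2}$; in the example above, the charged edge for the pair $(1,2)$ is $\{a_1,b_1\}\in T_2$, nowhere near $v$, which is exactly why charging at $v$ fails. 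Since a triple witnessing both $f(S_i,T_i)=1$ and $f(S_j,T_j)=1$ forces $i=j$ (this is where the fooling condition enters), for each fixed column $k$ and fixed $x$ the rows with $S_i$ connected in $T_k$ and a witness through $x$ map injectively into the $n-1$ edges of $T_k$. Hence each column of the matrix $H_{ij}=f(S_i,T_j)$ has at most $n(n-1)$ zero entries, while the fooling property forces at least $\binom{r}{2}$ zeros, giving $r=O(n^2)$ with no per-vertex decomposition at all. Incidentally, your per-vertex claim $|\{i: v\in U_i\}|\le 2n-1$ \emph{is} true: it follows from these same two lemmas by the identical counting restricted to that index set --- but once one has those lemmas, the detour through the interiors $U_i$ is superfluous.
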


We believe that our tools may be useful for other attempts to give upper bounds on sizes of fooling sets.

\section{Proof of Theorem~\ref{thm:main}}
We start by making precise the function~$f$.  Since the number of nonnegativity inequalities ``$x_e\ge 0$'' is $O(n^2)$, and we aim for an $O(n^2)$ upper bound, we can omit them.  We define
\begin{align*}
  \Sets  &:= \Bigl\{ S \subsetneq [n]             \Bigm| \abs{S}>1       \Bigr\} \\
  \Trees &:= \Bigl\{ T \subset    \binom{[n]}{2}  \Bigm| T \text{ tree } \Bigr\}.
\end{align*}
The function~$f$ is now the following:
\begin{equation*}
  f\colon \Sets\times \Trees \colon
  (S,T) \mapsto
  \begin{cases}
    0, &
      \begin{aligned}[t]
        &\text{if the sub-forest of~$T$ induced by~$S$ is a tree,}\\
        &\text{i.e.,  $T\cap \textstyle\binom{S}{2}$ is connected;}\\[.5ex]
      \end{aligned}\\[.5ex]
    1, & \text{ otherwise.}
  \end{cases}
\end{equation*}
We use the terminology ``$S$ is connected in~$T$'' for the case $f(S,T)=0$.

Now let $F = \bigl( (S_1,T_1)\dots,(S_r,T_r) \bigr)$ be a fooling set of size~$r$ for~$f$.  We define an $(r\times r)$-matrix $H$ by $H_{i,j} := f(S_i,T_j)$.
We will prove the following.

\begin{lemma}[Main Lemma]\label{lem:main}
  Every column of~$H$ has at most $n(n-1)$ zero entries.
\end{lemma}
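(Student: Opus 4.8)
\emph{Setup and reduction to a statement about a single tree.} The plan is to fix a column index~$j$, write $T:=T_j$, and reinterpret the zeros of that column. A zero is an index~$i$ with $H_{i,j}=f(S_i,T)=0$, i.e.\ with $S_i$ connected in~$T$; since $H_{j,j}=1$ we have $i\neq j$, and the diagonal entry $H_{i,i}=1$ records that $S_i$ is \emph{dis}connected in its own tree~$T_i$. For two distinct zero-indices $i,i'$ the fooling-set condition $H_{i,i'}H_{i',i}=0$ yields what I will call the \emph{cross condition}: $S_i$ is connected in $T_{i'}$, or $S_{i'}$ is connected in $T_i$. Thus everything reduces to a statement inside the single tree~$T$: there is a family of subtrees $S_i$ of~$T$, each disconnected in an auxiliary tree~$T_i$, subject to the pairwise cross condition, and I must bound the number of such~$i$ by $n(n-1)$. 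That number is exactly the count of ordered pairs of distinct vertices of $[n]$, which suggests the target of an injective encoding.

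\emph{The encoding.} First I would root~$T$ at a fixed vertex; a connected set $S_i$ then has a unique vertex $\rho_i$ of minimum depth, since any two minimal vertices would be joined within $S_i$ by a $T$-path through a strictly higher common ancestor, which would itself lie in $S_i$. Next I use the disconnection of $S_i$ in~$T_i$: the forest $T_i\cap\binom{S_i}{2}$ has at least two components, whereas $T\cap\binom{S_i}{2}$ is a tree, so some edge $\{a,b\}$ of $T\cap\binom{S_i}{2}$ joins two distinct $T_i$-components of $S_i$. Among all such \emph{crossing} edges I pick canonically the one whose deeper endpoint $b_i$ is highest in~$T$, breaking ties by vertex label, and set $\phi(i):=(\rho_i,b_i)$, an ordered pair of distinct vertices of~$S_i$. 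Because the $T$-parent $a_i$ of $b_i$ is determined by~$T$, the pair $\phi(i)$ also records the crossing edge $\{a_i,b_i\}$. As $\phi$ takes values among the $n(n-1)$ ordered pairs of distinct vertices, injectivity of $\phi$ on the zero-indices yields the bound.

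\emph{Injectivity — the crux.} Suppose $\phi(i)=\phi(i')=(\rho,b)$ with $i\neq i'$, and let $a$ be the common $T$-parent of~$b$; then $a,b\in S_i\cap S_{i'}$, the edge $\{a,b\}$ lies in~$T$, and by construction $a$ and~$b$ lie in different components of $S_i$ in~$T_i$ \emph{and} in different components of $S_{i'}$ in~$T_{i'}$. I aim to contradict the cross condition by showing that $S_i$ is disconnected in~$T_{i'}$ and, symmetrically, $S_{i'}$ is disconnected in~$T_i$. The engine is the uniqueness of separators along a tree path: in~$T_{i'}$ the path from~$a$ to~$b$ must leave $S_{i'}$, so it passes through a cut vertex $c\notin S_{i'}$; I would argue that $c$ lies on every $T_{i'}$-route between~$a$ and~$b$, and then use that $\{a,b\}$ is the \emph{highest} crossing edge for both subtrees to conclude $c\notin S_i$ as well, so that $a$ and~$b$ fall into different $T_{i'}$-components of $S_i$ and $S_i$ is disconnected in~$T_{i'}$.

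\emph{Main obstacle.} The hard part is precisely this last implication: converting a shared encoding into the simultaneous failure of \emph{both} disjuncts of the cross condition. One cannot argue by set containment, since connectivity of a vertex set in a tree is not monotone under inclusion; instead the argument must exploit that the $T_{i'}$-path between the fixed vertices~$a$ and~$b$ has an essentially unique bottleneck, so that a cut witnessing disconnection of $S_{i'}$ also witnesses that of~$S_i$ once the roots and the highest crossing edges agree. Making the canonical choice of $b_i$ robust enough that this transfer of the separating cut goes through in both directions, and checking that the tie-breaking does not spoil the symmetry between~$i$ and~$i'$, is where I expect the real work to lie.
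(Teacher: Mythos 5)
Your reduction to a single column, the choice of the minimum-depth vertex $\rho_i$, and the existence of a crossing edge are all sound; the fatal problem is exactly at the step you flag as the crux: the map $\phi(i)=(\rho_i,b_i)$ is \emph{not} injective, and no tie-breaking rule can make it so, because the pair (root, highest crossing edge) carries no information about which vertex \emph{outside} $S_i$ does the separating. Concretely, take $n=5$, let the column tree be $T=\bigl\{\{1,2\},\{2,3\},\{3,4\},\{1,5\}\bigr\}$ rooted at $1$, and set $S_i=\{1,2,3\}$ with $T_i=\bigl\{\{1,2\},\{2,4\},\{4,3\},\{3,5\}\bigr\}$, and $S_{i'}=\{1,2,3,4\}$ with $T_{i'}=\bigl\{\{1,2\},\{2,5\},\{5,3\},\{3,4\}\bigr\}$. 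Both sets are connected in $T$ with minimum-depth vertex $1$; $S_i$ splits in $T_i$ into $\{1,2\}\cup\{3\}$, and $S_{i'}$ splits in $T_{i'}$ into $\{1,2\}\cup\{3,4\}$, so in both cases the unique crossing edge is $\{2,3\}$ and hence $\phi(i)=\phi(i')=(1,3)$, with no freedom for tie-breaking. Yet the fooling-set condition between $i$ and $i'$ holds, because $S_{i'}$ \emph{is} connected in $T_i$ (the induced path $1$--$2$--$4$--$3$); one can even complete this to an honest fooling set by adding $(S_j,T_j)=(\{4,5\},T)$ as the column pair. The counterexample also shows why your planned ``transfer of the separating cut'' cannot be rescued: the separator of $S_i$ in $T_i$ is the node $4$, which lies \emph{inside} $S_{i'}$, while the separator of $S_{i'}$ in $T_{i'}$ is the node $5$; since these separators are different vertices, neither disconnection transfers to the other pair, and neither disjunct of the cross condition is violated.

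What the argument needs --- and what the paper's proof does --- is to put the separating vertex into the code in place of the root. The paper encodes each zero-index $i$ of column $k$ by a pair $(x,e)$, where $x\notin S_i$ is the middle node of a witness triple $(a,x,b)$ (i.e.\ $a,b\in S_i$ and $x$ lies on the $T_i$-path from $a$ to $b$) and $e=\{a,b\}$ is an edge of $T_k$ inside $S_i$. Lemma~\ref{lem:tree-witness} (proved via the triangle Lemma~\ref{lem:triangle}) guarantees that one can choose the edge $e\in T_k\cap\binom{S_i}{2}$ so that its endpoints are separated by this \emph{particular} $x$, and Lemma~\ref{lem:dj} then gives injectivity for free: if $i$ and $i'$ share the code $(x,\{a,b\})$, the single triple $(a,x,b)$ witnesses both $f(S_i,T_{i'})=1$ and $f(S_{i'},T_i)=1$ simultaneously, contradicting the fooling-set property. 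Note this contradiction requires the \emph{same} outside vertex $x$ to serve both indices --- precisely the information your encoding discards and my example exploits. With $x$ ranging over the $n$ nodes and $e$ over the $n-1$ edges of $T_k$, the count $n(n-1)$ then comes out exactly as you intended.
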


The Main Lemma immediately implies Theorem~\ref{thm:main}.

\begin{proof}[Proof of Theorem~\ref{thm:main} from Lemma~\ref{lem:main}]
  That~$F$ is a fooling set means that $H$ (has $1$s on the diagonal, and) in each pair of diagonally opposite off-diagonal entries, at least one is zero.  So $H$ has at least $\Omega(r^2)$ zero entries.  But the Main Lemma implies that~$H$ has at most $O(rn^2)$ zero entries.
\end{proof}

\subsubsection*{In the remainder of the section, we will prove the Main Lemma.}
We start with the fundamental definition of the proof.

\begin{definition}[Witness]\label{def:witness}
  Let $S\in\Sets$, $T\in\Trees$.  A \textit{witness for $f(S,T)=1$} is a triple $(a,x,b)$ of nodes which satisfies
  \begin{enumerate}[label=(\alph*)]
  \item\label{def:witness:S} $a,b\in S$, $x\notin S$; and
  \item\label{def:witness:T} $x$ lies on the (unique) path in~$T$ between $a$ and~$b$.
  \end{enumerate}
\end{definition}
Let us justify the terminology.  Clearly, if a witness exists, $S$ cannot be connected in~$T$, and hence $f(S,T)=1$.  On the other hand, if $f(S,T)=1$, i.e., the sub-forest of~$T$ induced by~$S$ has multiple connected components, then there is a node $x\notin S$ with the property that for every $a\in S$ there is a~$b\in S$ such that~$x$ lies on the path in~$T$ between $a$ and~$b$. (To find such an~$x$ shrink the connected components to ``super nodes'', and let~$x$ be any (non-shrunk) node on a path between two ``super nodes''.)  Hence, a witness for $f(S,T)=1$ exists if and only if $f(S,T)=1$.

Clearly, if $(a,x,b)$ is an witness for $f(S,T)=1$, then so is $(b,x,a)$.

\begin{lemma}\label{lem:dj}
  If $(a,x,b)$ is a witness for both $f(S_i, T_i)=1$ and $f(S_j, T_j)=1$ then $i=j$.
\end{lemma}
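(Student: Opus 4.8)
The plan is to assume $i\ne j$ and derive a contradiction from the defining property of a fooling set. First I would unpack what having the single triple $(a,x,b)$ serve as a witness for both $f(S_i,T_i)=1$ and $f(S_j,T_j)=1$ gives us. Applying Definition~\ref{def:witness}\ref{def:witness:S} to each of the two witnesses, we get $a,b\in S_i\cap S_j$ while $x\notin S_i$ and $x\notin S_j$; and by part~\ref{def:witness:T}, the node $x$ lies on the $a$--$b$ path in $T_i$ and, simultaneously, on the $a$--$b$ path in $T_j$.

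Next I would invoke the fooling-set inequality $f(S_i,T_j)\,f(S_j,T_i)=0$, which holds precisely because $i\ne j$. Hence at least one of $f(S_i,T_j)$ and $f(S_j,T_i)$ equals $0$. Since the situation is completely symmetric under swapping the two indices, I may assume without loss of generality that $f(S_i,T_j)=0$, i.e.\ that $S_i$ is connected in $T_j$.

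The crux is the following elementary fact about trees: if a node set $U$ induces a connected subgraph of a tree $T$ --- equivalently, if $T\cap\binom{U}{2}$ is connected --- then for any two nodes of $U$ the unique $T$-path joining them uses only nodes of $U$. Applying this with $U=S_i$, $T=T_j$, and the pair $a,b\in S_i$, the $a$--$b$ path in $T_j$ stays inside $S_i$. But the witness for $f(S_j,T_j)=1$ already told us that $x$ lies on this very path, so $x\in S_i$, contradicting $x\notin S_i$. (In the remaining case $f(S_j,T_i)=0$ one argues identically with the roles of $i$ and $j$ interchanged: now $S_j$ is connected in $T_i$, the $a$--$b$ path in $T_i$ stays inside $S_j$, and since $x$ lies on that path by the witness for $f(S_i,T_i)=1$, we get $x\in S_j$, contradicting $x\notin S_j$.) This contradiction forces $i=j$.

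I do not expect a genuine obstacle here. The only points requiring care are bookkeeping ones: matching each witness to the correct tree --- the witness for $f(S_j,T_j)=1$ is what supplies the $T_j$-path that we then trap inside $S_i$, and vice versa --- and stating the tree fact cleanly, since it is exactly the content that the phrase ``$S$ is connected in $T$'' is designed to encode.
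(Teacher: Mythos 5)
Your proof is correct and is essentially the paper's argument: the key point in both is that the witness conditions split across $S$ and $T$ (part~\ref{def:witness:S} depends only on the set, part~\ref{def:witness:T} only on the tree), so the single triple $(a,x,b)$ also certifies the off-diagonal entries, which contradicts $f(S_i,T_j)f(S_j,T_i)=0$. The only difference is presentational: the paper directly says $(a,x,b)$ is a witness for $f(S_i,T_j)=1$ and $f(S_j,T_i)=1$ and invokes the already-established fact that a witness forces the value $1$, whereas you run the argument in contrapositive form and re-derive that fact inline via the (correct) observation that a connected subset of a tree contains the paths between its own nodes.
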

\begin{proof}
  If $(a,x,b)$ is a witness for both $f(S_i, T_i)=1$ and $f(S_j, T_j)=1$, then it is also a witness for both $f(S_i,T_j)=1$ and $f(S_j,T_i)=1$.  But since~$F$ is a fooling set, this can only happen if $i=j$.
\end{proof}

\begin{lemma}\label{lem:triangle}
  Let $S\in\Sets$, $T\in\Trees$, and $a,b,c\in S$.  If at least one of $(a,x,b)$, $(b,x,c)$, $(c,x,a)$ is a witness for $f(S,T)=1$, then at least two of them are.
\end{lemma}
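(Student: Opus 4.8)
The plan is to reduce the statement to a purely graph-theoretic fact about the single tree~$T$ and the single node~$x$, discarding everything about~$S$ and the fooling set. First I would observe that condition~\ref{def:witness:S} of Definition~\ref{def:witness} is satisfied simultaneously by all three triples $(a,x,b)$, $(b,x,c)$, $(c,x,a)$: since $a,b,c\in S$ and, because the hypothesis guarantees at least one witness, $x\notin S$, every pair drawn from $\{a,b,c\}$ lies in~$S$ while $x$ lies outside~$S$. Hence whether a given triple is a witness depends \emph{only} on condition~\ref{def:witness:T}, namely on whether~$x$ lies on the corresponding path in~$T$. In particular $x\ne a,b,c$.

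Next I would invoke the standard characterization of ``lying on a path'' in a tree. Deleting~$x$ from~$T$ splits it into the connected components of the forest $T-x$, and for vertices $u,v\ne x$ the node~$x$ lies on the $u$--$v$ path in~$T$ if and only if $u$ and~$v$ fall into \emph{different} components of $T-x$. I would then define a relation on $\{a,b,c\}$ by declaring two of them equivalent when they lie in the same component of $T-x$. This is an equivalence relation, and by the previous paragraph the triple $(u,x,v)$ is a witness for $f(S,T)=1$ precisely when $u$ and~$v$ are inequivalent.

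The lemma then amounts to the elementary observation that, among three elements, the number of inequivalent pairs under any equivalence relation is never exactly one. Indeed, the three elements occupy either one equivalence class (giving $0$ inequivalent pairs), two classes in a $2$--$1$ split (giving exactly $2$ inequivalent pairs), or three classes (giving all $3$ inequivalent pairs). Equivalently, transitivity forbids the configuration in which, say, $a\not\equiv b$ while $b\equiv c$ and $c\equiv a$, since $b\equiv c\equiv a$ would force $a\equiv b$. Thus the number of witnesses among the three triples is $0$, $2$, or~$3$; as the hypothesis rules out~$0$, at least two of the triples are witnesses, as claimed.

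I do not expect any serious obstacle here: the whole content is the ``no lone cross-pair among three points'' fact, and the only point requiring care is verifying that $x$ is genuinely distinct from $a$, $b$, and~$c$ so that the component characterization applies cleanly — which follows immediately from $x\notin S$ together with $a,b,c\in S$. No distinctness of $a,b,c$ among themselves is needed, since any coincidence merely merges equivalence classes and so cannot produce a single inequivalent pair.
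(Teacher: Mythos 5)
Your proof is correct, and it is organized differently from the paper's. The paper argues by direct contradiction from uniqueness of tree paths: after noting, as you do, that condition~\ref{def:witness:S} holds for none or all of the three triples, it supposes that $x$ lies on the $a$--$b$ path but on neither the $b$--$c$ nor the $c$--$a$ path, and then the concatenation of those two $x$-avoiding paths yields a second $a$--$b$ path in~$T$, contradicting uniqueness. You instead delete the node~$x$ from~$T$ and use the cut-vertex characterization (for $u,v\ne x$, the node~$x$ lies on the $u$--$v$ path if and only if $u$ and~$v$ fall into different components of the resulting forest), which converts the lemma into the purely combinatorial fact that an equivalence relation on three points never has exactly one inequivalent pair. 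Both arguments ultimately rest on the same tree property --- uniqueness of paths is precisely what proves your component characterization --- but your packaging buys a little more: you obtain the full trichotomy that the number of witnesses among the three triples is $0$, $2$, or $3$, and degenerate cases such as $a=b$ are handled automatically, since coincidences only merge equivalence classes. The paper's argument is terser and needs no auxiliary setup. One point you handled that deserves to stay explicit in any write-up: the component characterization requires $x\ne a,b,c$, which you correctly derive from $x\notin S$ together with $a,b,c\in S$.
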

\begin{proof}
  Condition~\ref{def:witness:S} of Definition~\ref{def:witness} is satisfied by either none or all of the triples.  As for Condition~\ref{def:witness:T}, if say, $x$ were on the path between $a$ and~$b$ in~$T$, but~$x$ was neither on the path between $b$ and~$c$ nor on the path between $c$ and~$a$, then~$T$ would have two distinct paths between $a$ and~$b$, one through~$x$ and one avoiding~$x$ --- a contradiction.
\end{proof}

\begin{lemma}\label{lem:tree-witness}
  Let $S\in\Sets$, $T,T'\in\Trees$, and $x\in[n]$ such that $W_{x,S,T}\ne \emptyset$ and~$S$ is connected in~$T'$.  Then there is an edge $\{a,b\}\in T'\cap\binom{S}{2}$ with $(a,x,b)\in W_{x,S,T}$.
\end{lemma}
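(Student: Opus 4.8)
The plan is to use the spanning tree that $T'$ induces on~$S$ as a scaffold, and to \emph{transport} a single given witness along the edges of that scaffold, using the triangle property of Lemma~\ref{lem:triangle} as the engine. First, since $W_{x,S,T}\ne\emptyset$, I fix a witness $(a,x,b)$, so that $a,b\in S$, $x\notin S$, and $x$ lies on the $T$-path between $a$ and~$b$. Next, since $S$ is connected in~$T'$ and $T'$ is acyclic, the graph on vertex set~$S$ with edge set $T'\cap\binom{S}{2}$ is connected and acyclic, hence a spanning tree of~$S$; I let $a=v_0,v_1,\dots,v_k=b$ be the unique path from~$a$ to~$b$ in this spanning tree, so that every $v_i\in S$ and every $\{v_i,v_{i+1}\}\in T'\cap\binom{S}{2}$.

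The core of the argument is then to show that $x$ lies on the $T$-path between $v_i$ and~$v_{i+1}$ for at least one index~$i$; this immediately yields the desired edge $\{v_i,v_{i+1}\}\in T'\cap\binom{S}{2}$ together with $(v_i,x,v_{i+1})\in W_{x,S,T}$. I would prove this by induction on the length~$k$ of the path. The base case $k=1$ is exactly the hypothesis that $(v_0,x,v_1)=(a,x,b)$ is a witness. For the inductive step, I apply Lemma~\ref{lem:triangle} to the triple $v_0,v_{k-1},v_k\in S$: since $(v_0,x,v_k)$ is already a witness, the lemma forces a second of the triples $(v_0,x,v_{k-1})$ and $(v_{k-1},x,v_k)$ to be a witness as well. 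If $(v_{k-1},x,v_k)$ is a witness we are done, because $\{v_{k-1},v_k\}$ is a scaffold edge; otherwise $(v_0,x,v_{k-1})$ is a witness, and the induction hypothesis applied to the shorter path $v_0,\dots,v_{k-1}$ finishes the argument.

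The one point needing care is verifying the hypotheses of Lemma~\ref{lem:triangle} at each step: Condition~\ref{def:witness:S} of Definition~\ref{def:witness} is automatic, since every $v_i$ lies in~$S$ while $x\notin S$, so only the path condition~\ref{def:witness:T} is ever in question, and that is precisely what Lemma~\ref{lem:triangle} controls. I expect no serious technical obstacle; the real content is the recognition that $T'\cap\binom{S}{2}$ is a genuine spanning tree of~$S$ (so that a path of $S$-edges from~$a$ to~$b$ exists at all), and that the triangle behaviour of paths in~$T$ lets a witness ``slide'' along that path until it localizes on a single edge of the scaffold.

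\medskip\noindent
If a non-inductive phrasing is preferred, the same idea can be stated directly: among all the triples $(v_i,x,v_j)$ with $0\le i<j\le k$ for which $x$ lies on the $T$-path between $v_i$ and~$v_j$ (this family is nonempty, as it contains $(v_0,x,v_k)$), choose one minimizing $j-i$; Lemma~\ref{lem:triangle} applied to $v_i,v_{i+1},v_j$ (resp.\ to $v_i,v_{j-1},v_j$) shows that $j-i=1$, which again exhibits the required scaffold edge.
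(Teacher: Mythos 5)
Your proposal is correct and is essentially the paper's own argument: both work in the scaffold tree $Q$ with node set~$S$ and edge set $T'\cap\binom{S}{2}$, and both use Lemma~\ref{lem:triangle} to ``shorten'' a witness until it sits on an edge of~$Q$. The only cosmetic difference is that the paper phrases the descent as an extremal choice (a witness pair of minimal $Q$-distance among \emph{all} witness pairs) rather than your induction along the $Q$-path of one fixed witness; your final non-inductive rephrasing is essentially that same extremal argument.
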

\begin{proof}
  Consider the tree $Q$ with node set~$S$ and edge set $T'\cap\binom{S}{2}$.  Among all pairs $\{a,b\}$ with $(a,x,b)\in W_{x,S,T}$ take one for which the distance between $a$ and~$b$ in $Q$ is minimal.  We claim that $\{a,b\}\in Q$.  Indeed, by Lemma~\ref{lem:triangle}, if there was a~$c$ on the path in~$Q$ between $a$ and~$b$ with $c\notin\{a,b\}$, then for at least one of $\{a,c\}$ or $\{c,b\}$ the corresponding triplet with~$x$ in the middle is in $W_{x,S,T}$, and the distance in~$Q$ is shorter.
\end{proof}

For every $x\in[n]$ (a node) and $k\in[r]$ (a column of~$H$), denote by $Z_{x,k}$ the set of $i\in[r]$ (rows of~$H$) for which
\begin{itemize}
\item $S_i$ is connected in $T_k$; and
\item there exist $a,b \in S_i$ such that $(a, x, b)$ is a witness for $f(S_i,T_i)=1$.
\end{itemize}

\begin{lemma}\label{lem:key}
  For every $x\in[n]$ and $k\in[r]$, we have $\abs{Z_{x,k}} \le n-1$.
\end{lemma}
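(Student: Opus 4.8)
The plan is to fix $x\in[n]$ and $k\in[r]$ and to attach to each index $i\in Z_{x,k}$ a single edge of the tree~$T_k$, in such a way that distinct indices receive distinct edges. Since $T_k$ is a tree on the node set~$[n]$ and hence has exactly $n-1$ edges, this injectivity will immediately give $\abs{Z_{x,k}}\le n-1$. So the whole argument reduces to (i)~producing the edge and (ii)~checking injectivity, and the heavy lifting for both is already available in the earlier lemmas.

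First I would produce the edge. Fix $i\in Z_{x,k}$. By the very definition of $Z_{x,k}$, the set $S_i$ is connected in~$T_k$, and there are $a,b\in S_i$ with $(a,x,b)$ a witness for $f(S_i,T_i)=1$, i.e.\ $W_{x,S_i,T_i}\ne\emptyset$. These are precisely the hypotheses of Lemma~\ref{lem:tree-witness}, applied with $S:=S_i$, $T:=T_i$, and $T':=T_k$. The lemma then hands me an edge
\[
  e_i := \{a_i,b_i\} \in T_k\cap\tbinom{S_i}{2}
  \quad\text{with}\quad (a_i,x,b_i)\in W_{x,S_i,T_i}.
\]
The key point is that $e_i$ is a genuine edge of the spanning tree~$T_k$, so the assignment $i\mapsto e_i$ lands inside the $(n-1)$-element edge set of~$T_k$.

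Second I would verify that $i\mapsto e_i$ is injective on $Z_{x,k}$. Suppose $e_i=e_j=\{a,b\}$ for some $i,j\in Z_{x,k}$. Then $(a_i,x,b_i)$ is one of $(a,x,b)$ or $(b,x,a)$; invoking the symmetry of witnesses noted just before Lemma~\ref{lem:dj} (if $(a,x,b)$ is a witness then so is $(b,x,a)$), the triple $(a,x,b)$ is in both cases a witness for $f(S_i,T_i)=1$. The same reasoning applied to~$j$ shows $(a,x,b)$ is also a witness for $f(S_j,T_j)=1$. Now Lemma~\ref{lem:dj} forces $i=j$, establishing injectivity. Combining the two steps, $i\mapsto e_i$ injects $Z_{x,k}$ into the $(n-1)$-element edge set of~$T_k$, whence $\abs{Z_{x,k}}\le n-1$.

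I do not anticipate a serious obstacle, since the real content has been front-loaded into Lemmas~\ref{lem:tree-witness} and~\ref{lem:dj}; the statement is essentially a counting argument stitching them together. The one place demanding a moment's care is the injectivity step, where $e_i$ is an \emph{unordered} edge whereas a witness is an \emph{ordered} triple --- reconciling the two orientations is exactly what the symmetry remark preceding Lemma~\ref{lem:dj} is there to handle, and it is worth spelling out rather than glossing over.
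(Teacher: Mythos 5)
Your proposal is correct and follows exactly the paper's own argument: apply Lemma~\ref{lem:tree-witness} (with $S:=S_i$, $T:=T_i$, $T':=T_k$) to assign each $i\in Z_{x,k}$ an edge of~$T_k$, then use Lemma~\ref{lem:dj} to show these edges are distinct, and conclude since a tree on~$[n]$ has $n-1$ edges. Your explicit handling of the unordered-edge versus ordered-witness issue via the symmetry remark is a point the paper's terser proof leaves implicit, but it is the same proof.
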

\begin{proof}
  For each $i\in Z_{x,k}$, by Lemma~\ref{lem:tree-witness}, there is an edge $\{a_i,b_i\}\in T_k$ such that $(a,x,b)\in W_{x,S_i,T_i}$.

  By Lemma~\ref{lem:dj}, the edges $\{a_i,b_i\}$, $i\in Z_{x,k}$, are distinct.  Hence $\abs{ Z_{x,k} } \le n-1$.
\end{proof}

The proof of the Main Lemma is now almost complete.

\begin{proof}[Proof of Lemma~\ref{lem:main}]
  Let $k\in[r]$ be the index of a column of~$H$.  For every $i\in [r]$ with $f(S_i,T_k)=0$, there is an~$x$ such that $i \in Z_{x,k}$.  Hence the number of zero-entries in that column of~$H$ is at most
  \begin{equation*}
    \sum_{x=1}^n \abs{ Z_{x,k} },
  \end{equation*}
  which, by Lemma~\ref{lem:key} is at most $n(n-1)$.
\end{proof}

\section{Conclusion and Outlook}
We have settled the question whether the fooling set method can be used to prove a non-trivial lower bound for the extension complexity (i.e., smallest number of inequality in an extended formulation) of the Spanning Tree polytope.  There are two more, stronger, combinatorial lower bounds which are frequently used: the rectangle covering number and the fractional rectangle covering number.  The next step would be to find out if any of these grow more quickly than $n^2$.  In summary, though, despite the interest in the problem (e.g., \cite{Beasley-Klauck-Lee-Theis:Dagstuhl:13,Klauck-Lee-Theis-Thomas:Dagstuhl:15}) and partial results and approaches (e.g., \cite{Faenza-Fiorini-Grappe-Tiwary:nngrk-rCC:2012,Fiorini-Huynh-Joret-Pashkovich:spanningtree:2017}), the problem is now more open than ever.

\subsection*{Acknowledgments}
This research was supported by the Estonian Research Council, ETAG (\textit{Eesti Teadusagentuur}), through PUT Exploratory Grant \#620.  We also gratefully acknowledge funding by the European Regional Development Fund through the Estonian Center of Excellence in Computer Science, EXCS.


\end{document}